\newcommand{\Prob}[1]{\mathbb{P}\big[#1\big]}
\newcommand{\ind}[1]{\mathbb{1}\big(#1\big)}
\newcommand{\ones}{\mathbb 1}
\newcommand{\reals}{{\mathbb{R}}}
\newcommand{\argmax}{\mathop{\rm argmax}}
\newcommand{\mnorm}[1]{{\left\vert\kern-0.25ex\left\vert\kern-0.25ex\left\vert #1 
    \right\vert\kern-0.25ex\right\vert\kern-0.25ex\right\vert}}
\newcommand{\mc}{\mathcal}
\newtheorem{problem}{Problem} 
\newtheorem{definition}{Definition} 
\newtheorem{theorem}{Theorem}
\newtheorem{lemma}{Lemma}
\newtheorem{proposition}{Proposition}
\begin{document}
\title{Multi-agent Reach-avoid  MDP via Potential Games and Low-rank Policy Structure} 

\author{Adam Casselman,
Abraham P. Vinod, Sarah. H.Q. Li
\thanks{Adam Casselman and Sarah~H.~Q.~Li are with with the C3U Laboratory, Georgia Institute of Technology, 30332 Atlanta, GA, USA(email: {acasselman3,sarahli}@gatech.edu).\newline\indent Abraham P.~Vinod is with the Mitsubishi Electric Research Laboratories, Cambridge, MA, USA(email:abraham.p.vinod@ieee.org) }%
}

\maketitle
\thispagestyle{empty}
\pagestyle{empty}
\begin{abstract}
We optimize finite horizon multi-agent reach-avoid Markov decision process (MDP) via \emph{local feedback policies}. The global feedback policy solution yields global optimality but its communication complexity, memory usage and computation complexity scale exponentially with the number of agents. We mitigate this exponential dependency by restricting the solution space to local feedback policies and  show that local feedback policies are rank-one factorizations of global feedback policies, which provides a principled approach to reducing communication complexity and memory usage. Additionally, by demonstrating that multi-agent reach-avoid MDPs over local feedback policies has a potential game structure, we show that iterative best response is a tractable multi-agent learning scheme with guaranteed convergence to  deterministic Nash equilibrium, and derive each agent's best response via multiplicative dynamic program (DP) over the joint state space. Numerical simulations across different MDPs and agent sets show that the peak memory usage and offline computation complexity are significantly reduced while the approximation error to the optimal global reach-avoid objective is maintained.

\end{abstract}

\section{Introduction}\label{sec:intro}

As autonomous multi-agent systems scale to large populations, coordinating  agents in shared environments become increasingly challenging~\cite{garrow2022proposed,goyal2021advanced}. In applications ranging from advanced air mobility~\cite{mcgee1997flight} to warehouse automation, agents must perform tasks while avoid conflicts with other agents over extended time horizons. This requirement is naturally captured by finite horizon reach-avoid objectives, where agents must reach their designated target sets while avoid unsafe configurations  such as collision states~\cite{weibel2006safety}.

While single-agent reach-avoid MDP is well understood and can be solved via multiplicative DP~\cite{abate2008probabilistic}, extending this problem to multiple agents introduces fundamental scalability challenges: the corresponding policy's communication requirements and computation complexity grow exponentially with the number of agents, limiting its applicability in large multi-agent systems. 
A natural approach to reduce the communication overhead  is to approximate the global feedback policy via local feedback policies, where each agent selects actions based on its local state~\cite{mandal2025approximate}. However, since the reach-avoid objective is multi-linear and non-separable, the multi-agent reach-avoid MDP is non-amenable to distributed optimization techniques. In this paper, we leverage game-theoretic optimality notions to address the central question: 
\begin{center}
    \emph{In multi-agent reach-avoid MDPs, can local feedback policies tractably approximate optimal global performance while requiring less communication?}
\end{center}
We develop a framework that bridges reach-avoid MDP and game-theoretic learning by modeling the multi-agent reach-avoid MDP as a Markov potential game. This perspective enables decentralized policy execution to be interpreted as a Nash equilibrium computation problem over coupled MDPs while preserving the original reach-avoid semantics. 

\textbf{Contributions}. 
This paper makes the following contributions:
(i) we show that the multi-agent reach-avoid MDP under local feedback  admits a \emph{multi-linear} structure in its objective;
(ii) we show that local feedback policies correspond to a rank-one factorization of global feedback policies, providing a principled reduction in policy complexity;
(iii) we connect the multi-agent reach-avoid MDP to a Markov potential game, and structurally demonstrate that the optimality conditions imposed by Nash equilibrium is a relaxation of the optimality conditions for multi-agent reach-avoid MDP;
(iv) we design an iterative multiplicative DP that converges to deterministic Nash-optimal local feedback policies, and empirically compare its complexity and performance against multiplicative DP over global feedback policies.

All results in this paper are with respect to the \emph{restricted set of local feedback policies}. The optimal global feedback policy generally lies outside this set, and thus upper-bounds the achievable performance within this class.

\textbf{Related work}.
\label{sec:lit_review}
Multi-agent reach-avoid MDP differs from established path planning and traffic management models~\cite{stern2019multi,shone2021applications,hentzen2018maximizing}, which typically encode safety through instantaneous collision avoidance and sum-separable congestion costs~\cite{calderone2017markov}.
Prior work has developed DP for stochastic reach-avoid control~\cite{abate2008probabilistic,summers2010verification,vinod2021stochastic}, including extensions to time-varying and joint chance constraints~\cite{vinod2021stochastic,schmid2023computing}. 
Game-theoretic formulations of reach-avoid problems have largely focused on zero-sum interactions~\cite{chen2016multiplayer,fisac2015pursuit,chen2015safe,margellos2011hamilton}, including hierarchical frameworks that incorporate high-fidelity dynamics~\cite{fisac2019hierarchical}. In contrast, we consider a collaborative setting in which agents minimize a shared potential function~\cite{calderone2017markov,li2022congestion,li2023adaptive}. Our approach extends the policy decomposition paradigm in multi-agent MDPs with sum-separable objectives~\cite{mandal2025approximate,bertsekas2020multiagent} to multi-agent MDPs with reach-avoid objectives, which provides a stronger guarantee on trajectory-level reachability and safety guarantee. 

\textbf{Notation}: We denote a set with $N$ elements as $[N] := \{1, \ldots, N\}$; the natural number set as $\mathbb{N}$; the set of real(non-negative)-valued matrices with $i$ rows and $j$ columns as  $\reals^{i \times j}(\reals_+^{i\times j})$; the set of random variables with sample space $\Omega$ as $\mathbb{X}_{\Omega}$; a  probability simplex  for sample space $\mc{X}$ as $\Delta_{\mc{X}} := \{x \in \reals^{|\mc{X}|}_+ | \sum_i x_i = 1\}$; the indicator function as $\ones(x) = 1$ if $x$ is true, $\ones(x) = 0$ otherwise; the Cartesian product between sets $\{\mathcal{S}_1,\ldots, \mathcal{S}_N\}$ as $\otimes_{i}\mc{S}_i$; and the cardinality of set $\mc{S}$ as $|\mc{S}|$.
\section{Multi-agent Reach-avoid MDP}
\label{sec:setup}
Consider a MDP $\mc{M}: = \{\otimes_i \mc{S}_i, \otimes_i \mc{A}_i, \otimes_i P_i, \otimes_i p_i^0, \mc{T}\}$ that can be factored into $N$ individual MDPs for agent set $[N]$. The joint state space is factored as $\mc{S} := \otimes_i \mc{S}_i$, such that the  joint state is $s = (s_1,\ldots,s_N)$ and agent $i$'s state is $s_i \in \mc{S}_i$.  The joint action space  factored as $\mc{A} := \otimes_i \mc{A}_i$, such that the  joint action is $a = (a_1,\ldots,a_N)$ and agent $i$'s action is $a_i \in \mc{A}_i$. For presentation clarity, we assume identical state sets $\mc{S}_i = \mc{S}_j$ and identical action sets $\mc{A}_i = \mc{A}_j$ for all agents $i, j \in [N]$, such that the cardinality of the joint state and action spaces are  respectively   $|\mc{S}| = |\mc{S}_i|^N\in \mathbb{N}$ and $ |\mc{A}| =|\mc{A}_i|^N \in \mathbb{N}$. The finite time horizon $\mc{T}:= \{0,\ldots, T\}$ consists of $T+1$ time steps. The joint transition dynamics is factored as $P :=\otimes_i P_i$, such that individual transition dynamics $P_i: \mc{S}_i\times \mc{A}_i \mapsto \Delta_{\mc{S}_i}$ define the state-action to state transition probabilities for agent $i$. Each agent has independent state transitions when conditioned on its own state and actions, i.e., 
\begin{equation}\label{eqn:indpendent_transitions}
    s_i^{t+1} \sim P_i\big(s_i^t, a_i^t\big), \ \forall s_i^t \in \mc{S}_i, a_i^t \in \mc{A}_i, t \in \mc{T}.
\end{equation}
At  the first time step, $t=0$, each agent's state is described by $p_i^0 \in \Delta_{\mc{S}_i}$, a probability distribution over $\mc{S}_i$. We denote agent $i$'s state trajectory  as $\tau_i =  \big(s_i^0,\ldots,s_i^T\big) \in \mc{S}_i^{T+1},$
such that the joint trajectory is $\tau = \otimes_{i}\tau_i \in \mc{S}^{T+1}$.

\textbf{Joint reach-avoid objective}. All agents share a common reach-avoid objective: a) avoid other agents at all time steps and b) reach their respective target set $\mc{K}_i \subseteq \mc{S}_i$ at time $T$. If either of these conditions are violated for any agent, all agents receive zero reward. To model this objective, we introduce the following indicator functions, 
\begin{align}\label{eqn:collision_rvs}
X_i(s_i) = & \ \ind{s_i \in \mc{K}_i}, \\
Y_{ij}(s_i, s_j) =  & \begin{cases}
    \ind{s_i\neq s_j} & j \neq i\\
    1 & j = i
\end{cases},   \forall i, j \in [N].
\end{align}
The product 
\begin{equation}
   \textstyle R(\tau_1,\ldots, \tau_N) = \prod\limits_{i \in [N]} X_i(s^T_i) \prod\limits_{t=0}^{T}\prod\limits_{j\in[N]}Y_{ij}(s^t_i, s^t_j),
\end{equation} captures the \emph{joint reach-avoid objective}: all agents reach their target sets $\mc{K}_i$ and no agents share a state during horizon $\mc{T}$. 
\begin{definition}[\textsc{Global Feedback Policy}]
All agents choose actions via a mixed global feedback policy  $\pi:\mc{S}\times\mc{T}\mapsto \Delta_{\mc{A}}$, 
\begin{equation}\label{eqn:policy_def}
\begin{aligned}
\big(a_1,\ldots, a_N\big) \sim \pi^t_G\big(s_1,\ldots, s_N\big), \forall s_1,\ldots, s_N, t \in \mc{S} \times \mc{T}.
\end{aligned}
\end{equation}
We use $\Pi_G$ to denote the set of global feedback policies.
\end{definition} 
Under a global feedback policy $\pi_G \in \Pi_G$, each joint trajectory is a realization of a random variable sequence from the Markov process $h(\pi_G) \in \mathbb{X}_{\mc{S}^{T+1}}$, such that the probability of trajectory $\tau$ occurring is given by 
\begin{align}\label{eqn:markov_process_def}
\displaystyle\Prob{\tau |  h(\pi_G)} =  \prod_{i\in[N]} \Prob{s_i^0}\prod\limits^{T-1}_{t=0}  \Prob{s^{t+1} | s^t, \pi^t_G(s^t)},
\end{align}
where $\Prob{s^{t+1} | s^t, \pi_G(s^t)} = \sum_{a}  \Prob{s^{t+1} | s^t, a}\pi_G(a | s^t)$ is the probability of joint state transitions under $\pi_G$. 
The multi-agent reach-avoid MDP maximizes the expected joint reach-avoid objective over all global feedback policies, 
\begin{equation}\label{eqn:reach_avoid_obj}
\max_{\pi_G \in \Pi_G} F(\pi_G) := \mathbb{E}\big[ R(\tau_1,\ldots, \tau_N)| \tau \sim h(\pi_G)\big].
\end{equation}
This is a direct extension of a stochastic reach-avoid MDP to the multi-agent, finite state-action setting~\cite{abate2008probabilistic,vinod2021stochastic}, and is solvable offline via multiplicative DP~\cite{summers2010verification}, as shown in Algorithm~\ref{alg:global_dp_offline}.

\begin{algorithm}[hbtp!] 
\caption{Multiplicative DP with Global Feedback}
\begin{algorithmic}[1] 
\Require Reach-avoid MDP $\mc{M}$
\Ensure Value functions $V^0,\ldots, V^T$
        \State{\(V^T(s) = \prod_{i}X_i(s_i)\prod_{j\neq i}Y_{ij}(s_i, s_j), \quad \forall s \in \mc{S}\)}
    \For{\(t = T-1,\ldots,0\)}
        \For{\(s \in \mc{S}\)}

                     
            \State $V^t(s) = \textstyle\max_{a\in \mathcal{A}} \Big[ \prod_{j,\ell}Y_{j\ell}(s_j, s_{\ell} ) $
            \State \hfill $\textstyle \times \sum_{\hat{s}\in \mathcal{S}} \prod_{i=1}^N \mathbb{P}_i[\hat{s}_i | s_i, a_i] V^{t+1}(\hat{s}) \Big]$  
            
        \EndFor
    \EndFor
\end{algorithmic}
\label{alg:global_dp_offline}
\end{algorithm}
After performing Algorithm~\ref{alg:global_dp_offline} \emph{offline}, agents retrieve their actions \emph{online} as
\begin{equation} \label{eqn:global_dp_online}
    \displaystyle \pi^t_G(s) \in \arg\max_{a \in \mathcal{A}} \sum_{\hat{s} \in \mathcal{S}} \prod_{i=1}^N \mathbb{P}_i[\hat{s}_i | s_i, a_i] V^{t+1}(\hat{s}), \forall s \in \mc{S}.
\end{equation}
In~\eqref{eqn:global_dp_online}, we adopt a slight abuse of notation by using $\pi^t_G(s)$ to denote a discrete action, corresponding to a deterministic policy rather than a mixed policy over $\mc{A}$. From~\cite{abate2008probabilistic}, this deterministic policy in~\eqref{eqn:global_dp_online} is optimal against all mixed policies in $\Pi_G$ for the multi-agent reach-avoid problem~\eqref{eqn:reach_avoid_obj}.
Applied to multi-agent reach-avoid MDP,
multiplicative DP's complexity scales exponentially with respect to $N$ in three critical ways: 1) the maximization (lines 4-5) searches over $|\mc{A}_i|^N$ actions, 2) the number of value functions is $|\mc{S}_i|^N$, and 3) all agents must communicate their state information before any agent can compute their actions via $\pi_G^t(s)$. 
Together, these induce exponential growth in computation complexity, memory requirements, and communication overhead. 
We consider how to reduce them in this paper.
\begin{problem}
Can the optimal global feedback policy for~\eqref{eqn:reach_avoid_obj}, $\pi_G^\star$, be approximated by a tractable class of policies that
\begin{enumerate}
    \item reduces computational, memory, and  communication complexity;
    \item achieves comparable reach-avoid performance to $\pi^\star_G$?
\end{enumerate}
\end{problem}


\section{Joint Reach-avoid via Local Feedback}
We consider the class of mixed local feedback policies that have no online communication requirements between agents.
\begin{definition}[\textsc{Local feedback}]\label{assum:local_feedback}
Each agent $i \in [N]$ chooses actions via a mixed local feedback policy  $\pi_i:\mc{S}_i\times\mc{T}\mapsto \Delta_{\mc{A}_i}$, 
\begin{equation}\label{eqn:policy_def_local}
\begin{aligned}
a_i \sim \pi^t_i(s_i), \, \forall s_i, t, i  \in \mc{S}_i \times \mc{T} \times [N].
\end{aligned}
\end{equation}
We use $\otimes_i\Pi_i$ to denote the set of mixed local feedback policies. 
\end{definition}
This restriction to local feedback policies is not arbitrary. We show below that a joint local policy induces a structured low-rank approximation of a global feedback policy.

\textbf{Local policies as low-rank global feedback approximation}. A joint local feedback policy $\pi_1,\ldots,\pi_N$ is equivalent to a \emph{rank-one} decomposable global feedback policy~\eqref{eqn:policy_def} in the tensor space~\cite{li2024coupled}: $\Pi_1\times\ldots\times\Pi_N \subset \Pi_G$. 
For example, a joint policy $\big(\pi_1(s_1),\ldots,\pi_N(s_N)\big)$ can be viewed as $N$ vectors in $\Delta_{\mc{A}_i}$. Then, its tensor product in   $\reals^{\mc{A}_1\times\ldots\times\mc{A}_N}$ can recover a global feedback policy as $\pi^t_G(s^t_1,\ldots, s^t_N) = \prod_{i\in[N]}\pi_i^t(s^t_i)$, such that the joint action chosen is 
\begingroup
    \makeatletter\def\f@size{9}\check@mathfonts
\begin{equation}
    (a_1,\ldots, a_N)  \sim \big(\pi^t_1(s^t_1),\ldots,\pi^t_N(s^t_N)\big), \, \forall s^t_1,\ldots, s^t_N, t \in \mc{S} \times \mc{T}.
\end{equation}
\endgroup
Consider a two agent reach-avoid MDP where $|\mc{A}_i| =3$. At each state, the global feedback policy is representable as a matrix $W \in \reals^{3 \times 3}$, where each entry $w_{ab}$ is the probability that the action pair $(a, b) \in \mc{A}_1\times \mc{A}_2$ gets chosen. Each agent's local feedback policy is representable as  a vector $u, v\in \Delta_3$, such that the equivalent  global feedback policy has representation $W = uv^\top$, given by
\begin{equation}\label{eqn:global_vs_local}
    W = \begin{bmatrix}
u_1 v_1 & u_1 v_2 & u_1 v_3 \\
u_2 v_1 & u_2 v_2 & u_2 v_3 \\
u_3 v_1 & u_3 v_2 & u_3 v_3
\end{bmatrix}.
\end{equation}
Under local feedback policies, the number of policy variables at each state and time step is reduced from $|\mc{A}_i|^N$ to $N|\mc{A}_i|$ ($N=2$, $|\mc{A}_i| = 3$ in this example). However, reducing the search space comes with a trade-off: If agent one changes their local policy $u$ to some $\hat{u}$, they \textit{scale every element in the entire first row} of global policy $W = \hat{u}v^\top$ by the same proportion.  On the other hand, a global policy $W$ that optimizes~\eqref{eqn:reach_avoid_obj} is optimal against any element-wise scaling of $W$. 
As a result, optimizing over local feedback policies can guarantee at most that the global policy $W$ is optimal against all row-wise and column-wise ($N-1$ dimensional) scalings of tensor $W$. However, despite losing optimality guarantees, local feedback policies form  a search space whose complexity is linearly dependent on $N$ and have no inter-agent communication requirements during online policy evaluation. This low-rank structure suggests optimizing over local feedback policies as a structured approximation of~\eqref{eqn:reach_avoid_obj}, trading global optimality for tractability.

\subsection{Distributed Multi-agent Reach-avoid MDP}
Under local feedback policy, each individual trajectory is a realization of a random variable sequence from the Markov process $h_i(\pi_i)\in\mathbb{X}_{\mc{S}_i^{T+1}}$, such that the probability of trajectory $\tau_i \in \mc{S}_i^{T+1}$ occurring is given by \begin{align}\label{eqn:markov_process_def}
\textstyle\Prob{\tau_i |  h_i(\pi_i)} =   \mathbb{P}_i[s_i^0]\prod\limits^{T-1}_{t=0}  \Prob{s_i^{t+1} | s_i^t, \pi^t_i(s_i^t)},
\end{align}
where $\Prob{s_i^{t+1} | s_i^t, \pi^t_i(s^t)} = \sum_{a_i}  \Prob{s_i^{t+1} | s_i^t, a_i}\pi^t_i(a_i | s_i^t)$ is the probability of player $i$'s state transitions under $\pi_i$, 
and  the expected joint reach-avoid objective is given by $F(\pi_1,\ldots, \pi_N) = \mathbb{E}\big[ R(\tau_1,\ldots, \tau_N)| \tau_j \sim h_j(\pi_j), \, \forall j \in [N]\big].$ A \emph{distributed} extension of the multi-agent reach-avoid MDP~\eqref{eqn:reach_avoid_obj} is given by
\begin{equation}\label{eqn:distributed_mdp}
\max_{\pi_1,\ldots,\pi_N \in \otimes \Pi_i} F(\pi_1,\ldots, \pi_N).
\end{equation}
Solving~\eqref{eqn:distributed_mdp} is challenging due to  $F$ being nonconvex over the local feedback policies. 
To show this, 
we define $y^t_i(s_i, \hat{s}_i)$ to denote agent $i$'s probability of transitioning to $\hat{s}_i$ from $s_i$ at time $t$. We observe that each $y^{t}_i(\cdot, s_i)$ is linear in $\pi^t_i(s_i)$,
\begin{equation}\label{eqn:y_linear_in_pi}
    \textstyle y^{t}_i\big(s_i, \hat{s}_i\big) = \Prob{\hat{s}_i | s_i, \pi^t_i(s_i)}, 
\end{equation}
for all $t, i, s_i, \hat{s}_i \in \mc{T}\times [N]\times \mc{S}_i\times \mc{S}_i$.
\begin{lemma}\label{lem:multilinear_y}
Any real-valued function $G: \mc{S}^{(T+1)} \mapsto \reals$ that takes in a joint trajectory $\{\tau_i\}_{i\in[N]}$, where $\tau_i \sim h_i(\pi_i)$~\eqref{eqn:markov_process_def}, then the expectation of $G$ with respect to $\{\pi_i\}_{i\in\mathbb{N}}$ is multilinear in $\{y_i\}_{i\in\mathbb{N}}$ \eqref{eqn:y_linear_in_pi}, i.e., 
\begingroup
\makeatletter\def\f@size{9}\check@mathfonts
\begin{multline}
    \mathbb{E}\big[G(\tau_1,\ldots,\tau_N) | \tau_j \sim h_j(\pi_j), \forall j\in [N]\big]  = \\
    \displaystyle\sum_{\substack{\otimes_i \tau_i \in \mc{S}^{T+1}}} G(\tau_1,\ldots,\tau_N)\prod_{t=0}^{T-1}\prod_{j\in [N]} \Prob{s_i^0} y^t_j\big(s_j^t, s_j^{t+1}\big). 
\end{multline}
\endgroup
\end{lemma}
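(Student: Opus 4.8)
The plan is to reduce the statement to an elementary manipulation of a finite sum, since the underlying sample space $\mc{S}^{NK}$ is finite. First I would write the expectation in the style of \eqref{eqn:global_objective_expectation} as an explicit sum, $\expect{G(\tau_1,\ldots,\tau_N)} = \sum_{\tau_1,\ldots,\tau_N} G(\tau_1,\ldots,\tau_N)\,\Prob{\tau_1,\ldots,\tau_N}$, where the sum ranges over all joint trajectories in $\mc{S}^{NK}$ and $\Prob{\tau_1,\ldots,\tau_N}$ is the probability that the joint Markov process $\big(h_1(\pi_1),\ldots,h_N(\pi_N)\big)$ realizes the tuple $(\tau_1,\ldots,\tau_N)$. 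This step is just the definition of expectation for a discrete random variable, so the entire content of the lemma lies in computing $\Prob{\tau_1,\ldots,\tau_N}$.

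The key step is to factorize this joint probability, which I would do in two stages. By Assumption~\ref{assum:ind_Markov} each player's transitions depend only on its own current state and action, and the players' initial states are drawn independently from the $p_i$; hence the joint trajectory law factorizes across players, $\Prob{\tau_1,\ldots,\tau_N} = \prod_{j\in[N]} \Prob{\tau_j \mid h_j(\pi_j)}$. Then, for each individual player, I substitute the single-player Markov factorization \eqref{eqn:markov_process_def}, namely $\Prob{\tau_j \mid h_j(\pi_j)} = p_j\big(s_j(0)\big)\prod_{t} y_j\big(s_j(t),s_j(t+1),t\big)$. Combining the two stages and reordering the products yields exactly the claimed product $\prod_{j}\prod_{t} p_j(s_j(0))\, y_j(s_j(t),s_j(t+1),t)$ inside the sum, giving the stated identity (modulo bookkeeping of the product's index range and the placement of the initial-distribution factor $p_j(s_j(0))$, which should appear once per player rather than once per time step).

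Finally I would read off multilinearity directly from this expression. For a fixed player $j$ and a fixed time $t$, the factor $y_j(s_j(t),s_j(t+1),t)$ is determined by the policy block $\pi_j(\cdot,t)$, and each summand contains exactly one such factor for every $(j,t)$ pair; therefore, holding all blocks but one fixed, the sum is a linear (in fact homogeneous of degree one) function of the remaining block $\{y_j(\cdot,\cdot,t)\}$, which is the definition of multilinearity in $\{y_i\}_{i\in[N]}$. Composing this with the linearity of each $y_j$ in $\pi_j$ from \eqref{eqn:y_linear_in_pi} then gives multilinearity of the expectation in the policies, the form used in the sequel. I expect the only real obstacle to be the first factorization: carefully justifying that independent per-player dynamics together with independent initial conditions make the joint trajectory law a product of the marginal trajectory laws, rather than merely a product of one-step conditionals; this requires invoking independence both at $t=0$ and at every transition, after which the remainder is routine index accounting.
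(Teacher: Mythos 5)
Your proof is correct and takes essentially the same route as the paper's own: expand the expectation as a finite sum over joint trajectories, factor the joint law across players using the independence of the dynamics and initial conditions, and substitute the per-player Markov factorization~\eqref{eqn:markov_process_def} to expose one $y_j(\cdot,\cdot,t)$ factor per $(j,t)$ block, from which multilinearity is read off. If anything you are more careful than the paper, which writes the cross-player factorization $\prod_i \Prob{\tau_i}$ without comment and never explicitly draws the multilinearity conclusion; your observation that $p_j\big(s_j(0)\big)$ should appear once per player rather than inside the product over $t$ correctly identifies a typo in the lemma's displayed formula.
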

\begin{proof}
Let $\Gamma$ denote the set of all realizable joint trajectories, then $\mathbb{E}\big[G(\tau_i,\tau_{-i}) | \tau_j \sim h_j(\pi_j), \forall j\in [N]\big]$ is evaluated as
    \begin{equation}
        \textstyle F_i(\pi_i, \pi_{-i}) = \sum_{\tau \in \Gamma} \prod_{i}\Prob{\tau_i}G(\tau_i, \tau_{-i}),
    \end{equation}
where $\Prob{\tau_i}$ denotes the joint probability of agent $i$ being at state $s_i^t$ for all time steps $t= 0,\ldots, T-1$. We can directly evaluate $\Prob{\tau_i}$ as $\Prob{\tau_i} = \Prob{s_i^0} \prod_{t = 0}^{T-1}\Prob{s_i^{t+1} | s_i^t , \pi^t_i\big(s_i^t\big)}$,
    where $\Prob{s_i^{t+1} | s_i^t , \pi^t_i\big(s_i^t\big)} = y^t(s_i^t, s_i^{t+1})$ as defined in~\eqref{eqn:y_linear_in_pi} and $\Prob{s_i^0}$ is the initial state distribution. 
\end{proof}
Applying Lemma~\ref{lem:multilinear_y} to the coupled reach-avoid MDP~\eqref{eqn:individual_mdp}, we observe that~\eqref{eqn:individual_mdp} is a multilinear optimization problem over compact probability simplexes (policy spaces). Its global optima can therefore be difficult to compute and certify:  most gradient-based algorithms tend to converge to KKT solutions that are not sufficient for guaranteeing optimality. Interestingly,  despite being multi-linear in $\pi_i$, we show via the potential game connection that~\eqref{eqn:individual_mdp} has a globally optimal Nash equilibrium that multiplicative DP is guaranteed to find~\cite{summers2010verification,abate2008probabilistic,vinod2021stochastic}. 

\subsection{Multi-agent Reach-avoid as Markov Potential Game}
We distribute this problem further by formulating $N$ coupled reach-avoid MDPs that each optimize over a single local feedback policy,  given by 
\begin{equation}\label{eqn:individual_mdp}
\max_{\pi_i \in \Pi_i} F(\pi_1,\ldots, \pi_N),\quad \forall i \in [N].
\end{equation}
Agents reach a Nash equilibrium when no one can further optimize their individual reach-avoid MDP~\eqref{eqn:individual_mdp} via unilateral policy changes.
\begin{definition}[\textsc{Nash equilibrium}]\label{defn:nash}
    The joint policy $(\pi^\star_1,\ldots,\pi^\star_N)$ is a Nash equilibrium if and only if
    \begin{equation}\label{eqn:ne}
        F(\pi^\star_i, \pi^\star_{-i})  \geq F(\pi_i, \pi^\star_{-i}), \ \forall \pi_i \in \Pi_i, \ i \in [N].
    \end{equation}
\end{definition}
Nash equilibria relaxes the optimality conditions of~\eqref{eqn:distributed_mdp} further and is a agent-by-agent optimality condition~\cite{bertsekas2020multiagent}. 
As such, the joint reach-avoid objective at Nash equilibrium, $F(\pi^\star_i, \pi^\star_{-i})$, is a lower bound for the joint reach-avoid objective~\eqref{eqn:reach_avoid_obj} at the optimal global feedback policy $\pi^\star_G$, i.e., $F(\pi^\star_G) \geq F(\pi_1^{\star},\ldots \pi_N^\star)$. We show in simulation that this lower bound is tight for different MDPs. 

\textbf{Connections to Markov potential games}. The coupled individual reach-avoid MDP in~\eqref{eqn:individual_mdp} is a potential game~\cite{monderer1996potential}---i.e., there exists an ordinal potential function $F:\Pi_1\times\ldots\times\Pi_N \mapsto \reals$ that satisfies, 
\begingroup
    \makeatletter\def\f@size{9}\check@mathfonts
\begin{equation}\label{eqn:potential_condition}
\begin{aligned}
     F_i(\pi_i, \pi_{-i}) > F_i(\hat{\pi}_i, \pi_{-i})\Leftrightarrow F(\pi_i, \pi_{-i}) > F(\hat{\pi}_i, \pi_{-i}) \\
     \forall \pi_i, \hat{\pi}_i \in \Pi_i, \, i \in [N]. 
\end{aligned}
\end{equation}
\endgroup
Given that each agent's objective $F_i$ are identical, $F_i = F$~\eqref{eqn:reach_avoid_obj} is the obvious choice of the potential function. As a Markov potential game,~\eqref{eqn:individual_mdp} has  Nash equilibrium solutions that possess well-behaved computational and theoretical properties. 


\emph{Solution structure}: a potential  game has at least one pure Nash equilibrium $(\pi_1^\star,\ldots, \pi_N^\star)$ where each $\pi_i^\star$ is deterministic: at every state, a unique action is always chosen~\cite{monderer1996potential}.

\emph{Multi-agent learning dynamics}. Iterative best response always converges to a Nash equilibrium in the local feedback policy space~\cite{monderer1996potential}. Extensions via gradient-based methods such as Frank-Wolfe~\cite{li2023adaptive} and gradient play~\cite{marden2012state} can also  compute the Nash equilibrium. 

\section{Iterative multiplicative DP}\label{sec:ma_dp}
In this section, we modify the multiplicative DP (Algorithm~\ref{alg:global_dp_offline}) to formulate a best response algorithm for solving the coupled reach-avoid MDP~\eqref{eqn:individual_mdp}. 
A key insight is that  using occupancy measures~\cite{li2022congestion}, Algorithm~\ref{alg:global_dp_offline} can be modified to solve for a single local feedback policy over $|\mc{S}_i|$ states rather than the global feedback policy over $|\mc{S}_i|^N$ states. Then, we can leverage iterative best response to compute the Nash equilibrium.

\textbf{Multi-agent value function}
We first show that the multi-agent reach-avoid value function from Algorithm~\ref{alg:global_dp_offline} can be expressed as a multi-linear function of the joint local feedback policies $\pi = (\pi_1,\ldots,\pi_N)$ through $y_1,\ldots, y_N$~\eqref{eqn:y_linear_in_pi}.
\begin{equation}
\begin{aligned}\label{eqn:ma_value_function}
    V_\pi^{T}(s_1,\ldots,s_N) &= \textstyle \prod_{j}X_j(s_j)\prod_{i,j} Y_{ij}(s_i, s_j), \\
    V_\pi^{t}(s_1,\ldots,s_N) &= \textstyle \prod_{i,j} Y_{ij}(s_i, s_j) \\
    \textstyle \times \sum_{\hat{s}\in\mathcal{S}} & \textstyle \prod_{j}y^t_j(s_j, \hat{s}_j) V_\pi^{t+1}(\hat{s}), \, \forall t \in [0,T-1].
\end{aligned}   
\end{equation}

\begin{proposition}\label{prop:recursive_values}
The multi-agent value functions $V^0_\pi,\ldots, V^T_\pi$ in~\eqref{eqn:ma_value_function} are the expected value of the random variable
\begin{equation}
    \textstyle R_t^{T}(\tau_1,\ldots,\tau_N)= \prod_{i}X_i\big(s_i^T\big)\prod_{\hat{t} = t}^{T} \prod_{i,j} Y_{ij}\big(s_i^{\hat{t}}, s_j^{\hat{t}}\big).
\end{equation}
    with respect to $\pi$---i.e., $V^t_\pi(s_1,\ldots,s_N)$~\eqref{eqn:ma_value_function} is equivalent to 
    \begin{equation}
    \begin{aligned}\label{eqn:pf_prop_0}
        V^t_\pi(s_1,\ldots, s_N)  = & \mathbb{E}_\pi\Big[R_t^T(\tau_1,\ldots,\tau_N) \mid  \\
        & \tau_i \sim h_i(\pi_i), \tau_i^t = s_i, \forall i \in [N]\Big].
    \end{aligned}
    \end{equation}
\end{proposition}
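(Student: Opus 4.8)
The plan is to prove~\eqref{eqn:pf_prop_0} by backward induction on $t$, running from $t=T$ down to $t=0$. The two structural facts I would lean on are (i) the reward $R_t^T$ telescopes in time, and (ii) the one-step joint transition factorizes across players by Assumption~\ref{assum:ind_Markov}. For the base case $t=T$, the inner product over $\hat t$ in $R_T^T$ collapses to its single $\hat t = T$ term, so $R_T^T = \prod_i X_i(s_i(T)) \prod_{i,j} Y_{ij}(s_i(T), s_j(T))$; conditioning the joint state at time $T$ to equal $s=(s_1,\ldots,s_N)$ renders every factor deterministic, and the resulting value coincides with $V_T^\pi$ in~\eqref{eqn:ma_value_function}.

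For the inductive step I would assume~\eqref{eqn:pf_prop_0} at time $t+1$ and establish it at $t$. The key observation is the factorization
\begin{equation*}
R_t^T = \Big(\prod_{i,j} Y_{ij}\big(s_i(t), s_j(t)\big)\Big)\, R_{t+1}^T,
\end{equation*}
which holds because $R_t^T$ and $R_{t+1}^T$ share the identical terminal factor $\prod_i X_i(s_i(T))$ and differ only in the $\hat t = t$ block of avoidance indicators. Conditioning on the joint state at time $t$ being $s$ freezes the leading factor to the constant $\prod_{i,j} Y_{ij}(s_i, s_j)$, which therefore leaves the conditional expectation. For the surviving term I would invoke the tower property by conditioning on the joint state $\hat s$ at time $t+1$: since $R_{t+1}^T$ is a function of the states at times $t+1,\ldots,T$ only, the Markov property discards the time-$t$ state once $\hat s$ is fixed, giving $\mathbb{E}^\pi[R_{t+1}^T \mid s(t+1)=\hat s] = V_{t+1}^\pi(\hat s)$ by the induction hypothesis. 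The one-step weight of each $\hat s$ is $\Prob{s(t+1)=\hat s \mid s(t)=s} = \prod_j y_j(s_j, \hat s_j, t)$, where the product form is exactly Assumption~\ref{assum:ind_Markov} together with the definition of $y_j$ in~\eqref{eqn:y_linear_in_pi}. Collecting the pieces yields
\begin{equation*}
\mathbb{E}^\pi\big[R_t^T \mid s(t)=s\big] = \prod_{i,j} Y_{ij}(s_i, s_j) \sum_{\hat s \in \mc{S}^N} \prod_j y_j(s_j, \hat s_j, t)\, V_{t+1}^\pi(\hat s) = V_t^\pi(s),
\end{equation*}
which is the recursion in~\eqref{eqn:ma_value_function} and closes the induction.

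I expect the delicate step to be this conditioning argument rather than the algebra. Two things must be checked: that $R_{t+1}^T$ is measurable with respect to the states from time $t+1$ onward, so that the Markov property legitimately removes the dependence on the time-$t$ state; and that the per-player kernels genuinely multiply into the joint kernel $\prod_j y_j$. The latter is precisely where the independence of the players' transitions (Assumption~\ref{assum:ind_Markov}) is indispensable---without it the joint one-step transition would not factor, and the recursion over the product state space $\mc{S}^N$ would fail to close in the product form of~\eqref{eqn:ma_value_function}.
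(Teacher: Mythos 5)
Your proposal is correct and follows essentially the same route as the paper's proof: both argue by backward induction from $t=T$, hinge on the factorization $R_t^T = \bigl(\prod_{i,j} Y_{ij}(s_i(t), s_j(t))\bigr) R_{t+1}^T$ (stated explicitly in the paper's appendix), and use the product form $\prod_j y_j(s_j,\hat{s}_j,t)$ of the joint one-step kernel afforded by Assumption~\ref{assum:ind_Markov} and the local feedback policies. The only difference is presentational---the paper manipulates explicit sums over trajectories and merges $\sum_{\hat{s}}\sum_{\tau_{t+2}}$ into $\sum_{\tau_{t+1}}$, whereas you invoke the tower and Markov properties directly---which is a cleaner phrasing of the same argument.
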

The proof is provided in App.~\ref{app:a}. Proposition~\ref{prop:recursive_values} specifies the more general results from~\cite{abate2008probabilistic,summers2010verification} to the finite state-action MDP under independent transition dynamics~\eqref{eqn:indpendent_transitions}. 

\textbf{Computing best response}
To compute agent $i$'s best response when all other agents take policies $\pi_{-i}$, we leverage the other agents' occupancy measures and $y_{-i}$ to project the joint state value functions $V_\pi^0,\ldots, V_\pi^T \in \mc{S}$~\eqref{eqn:ma_value_function} to value functions over agent $i$'s individual state $\mc{S}_i$. Let $\rho^t_{i}(s_{i})$ denote the probability that agent $i$ is in state $s_i$ at time $t$, and $\rho^t_{-i}(s_{-i}) =  \prod_{j\neq i} \rho^t_{j}(s_{j})$ correspond to the occupancy measures of agents $[N]/\{i\}$. These are occupancy measures and given local feedback policies, can be found via the forward propagation of policies $\pi_{-i}^0,\ldots,\pi_{-i}^{t-1}$ through agent $i$'s Markov dynamics (line 4-7 of Algorithm~\ref{alg:local_offline_dp})~\cite{li2022congestion}. Together, $\rho_{j}^t(s_{j}) y^t_{j}(s_{j}, \hat{s}_{j})$ denote the joint probability that agent $i$ was in state $s_{j}$ at time $t$ and state $\hat{s}_{j}$ at time $t+1$.  The expected multi-agent reach-avoid value functions~\eqref{eqn:ma_value_function} can be directly computed as $W_i^t(s_i) = \mathbb{E}\Big[V_{\pi_{i}, \pi_{-i}}^t(s) |\pi_{-i}\Big]$, given by

\begin{equation}\label{eqn:ma_value_iteration}
\sum_{s_{-i}} \rho_{-i}^t(s_{-i}) \prod_{i,j} Y_{ij}(s_i, s_j)
\sum_{\hat{s}} \prod_{j\neq i} y^t_{j}(s_{j}, \hat{s}_j) \\
 V_\pi^{t+1}(\hat{s}). 
\end{equation}
Proposition~\ref{prop:recursive_values} and~\eqref{eqn:ma_value_iteration} enable us to directly adapt multiplicative DP from~\cite{summers2010verification,abate2008probabilistic} to perform a best response scheme for reach-avoid Markov potential games~\eqref{eqn:individual_mdp}. The resulting algorithm is shown in Algorithm~\ref{alg:iterative_br}.

After Algorithm~\ref{alg:local_offline_dp}, player $i$'s local action can be retrieved online during policy evaluation as
\begin{equation} \label{eqn:local_policy_extraction}
        \textstyle (\pi^\star_i)^t(s_i) \in  \textstyle \argmax\limits_{a_i \in \mc{A}_i} \sum_{\hat{s}_i} \mathbb{P}_i[\hat{s}_i | s_i, a_i] W_i^{t+1}(\hat{s}_i), \ \forall s_i \in \mc{S}_i,
    \end{equation}
where $(\pi_i^\star)^t(s_i)$ is an $\argmax$ action that achieves $W^t_i(s_i)$ for all $t,s_i \in [T]\times\mc{S}_i$. In a slight abuse of notation, $(\pi^\star_i)^t(s_i)$ in~\eqref{eqn:local_policy_extraction} denotes a single action corresponding to a deterministic policy instead of a mixed policy in $\mc{A}_i$. From~\cite{abate2008probabilistic}, the deterministic policy in~\eqref{eqn:local_policy_extraction} is optimal against all mixed policies in $\Pi_i$ for the \emph{distributed} multi-agent reach-avoid problem~\eqref{eqn:individual_mdp}.

\begin{algorithm}
\caption{Local Feedback Best Response}
\label{alg:local_offline_dp}
\begin{algorithmic}[1]
    \Require Reach-avoid MDP $\mc{M}$
    \Ensure Player $i$'s best response functions $W_i^0,\ldots,W_i^T$
    \For{$j \in [N] \setminus \{i\}$} 
        \State $\rho^0_j(s_j) = \mathbb{P}[s^0_j], \quad \forall s_j \in \mathcal{S}_j$
    \EndFor
    \For{$t=0, \ldots, T-1$}
        \For{$j \in [N] \setminus \{i\}$}
            \For{$\hat{s}_j \in \mathcal{S}_j$}
                \State $\rho_j^{t+1}(\hat{s}_j) = \sum_{{s}_j} \mathbb{P}_j[\hat{s}_j | s_j, \pi^t_j(s^t_j)]\rho_j^t({s}^t_j)$
            \EndFor
        \EndFor
    \EndFor 
    \State $V_\pi^{T}(s) = \prod_{j}X_j(s_j)\prod_{i,j} Y_{ij}(s_i, s_j),\quad \forall s \in \mc{S}$
    \State $W_i^T(s_i) = \textstyle\sum\limits_{s_{-i}} \rho^T_{-i}(s_{-i})V_{\pi}^{T}(s_i, s_{-i}), \quad \forall s_i \in \mathcal{S}_i$
    \For{$t = T-1,\ldots,0$}
        \For{\(s_{-i}, \hat{s}_{-i} \in \mc{S}_i^{N-1}\)}
            \State{\(\rho(s_{-i}, \hat{s}_{-i}) = \prod_{j\neq i} \mathbb{P}_j\Big[ \hat{s}_j | s_j, \pi^t_j(s_j^t)\Big]\rho^t(s_{-i})\)}\label{alg:rho_density_computation}
        \EndFor
        \For{\(s_i \in \mc{S}_i\)}
        \State{\( V_\pi^t(s_i,s_{-i}) = \prod_{j,\ell}Y_{j\ell}(s_j, s_{\ell} ) \)}
        \State{\quad \( \times \sum_{\hat{s}_i,\hat{s}_{-i}} \prod_{i}\rho_i(s_i, \hat{s}_i)V_\pi^{t+1}(\hat{s}_i,\hat{s}_{-i}) \)}                    
        \State $W_i^t(s_i) = \textstyle \max\limits_{a_i \in \mc{A}_i} \sum\limits_{\hat{s}_i}\mathbb{P}_i[\hat{s}_i | s_i, a_i]$
        \State$ \sum\limits_{s_{-i}, \hat{s}_{-i}} \rho(s_{-i}, \hat{s}_{-i})\prod\limits_{j,\ell} Y(s_j,s_\ell)V_\pi^{t+1}(\hat{s}_i, \hat{s}_{-i})$
        \EndFor
    \EndFor
\end{algorithmic}
\end{algorithm}
Unlike standard DP approaches to compute the optimal global policy~\cite{abate2008probabilistic,summers2010verification,vinod2021stochastic}, agent $i$'s multiplicative DP is not recursive by itself---i.e., $W_i^t$ is not recursively defined by $W_i^{t+1}$. Instead, we ``average'' out the effect of other agents' state on the multi-agent value function using their occupancy measure. 
From Algorithm~\ref{alg:local_offline_dp}, we can formulate an iterative best response that converges to the Nash equilibrium~\cite{monderer1996potential}. 
\begin{algorithm}
\caption{Offline Iterative best response}
\begin{algorithmic}[1]
\Require Reach Avoid MDP $\mathcal{M}$
\Ensure Policy achieving Nash equilibrium \(\pi_1^\star, \ldots, \pi_N^\star\)
\While{\(k=1,\ldots\)}
\State {\(i = k \mod N\)}
\State{\(W^k = \text{Alg.~\ref{alg:local_offline_dp}} (\pi_{-i}^{k-1}, P_{-i}, p_{-i}, \mc{T}_{-i})\)}
\State{$\pi_i \leftarrow \eqref{eqn:local_policy_extraction}$}
\State{\(\pi_i^{k} = \pi_i; \pi_{-i}^{k} = \pi_{-i}^{k-1}\)}
\If{\(W^{k'-N} = W^{k'}, \ \forall k' \in \{k,\ldots, k-N-1\}\)}
    \State{\(\pi_j^\star  = \pi_j^k, \quad \forall j \in [N]\)}
\EndIf
\EndWhile
\end{algorithmic}
\label{alg:iterative_br}
\end{algorithm}
\begin{theorem}
    Algorithm~\ref{alg:iterative_br} converges to a pure-strategy Nash equilibrium in polynomial time~\cite{monderer1996potential}.
\end{theorem}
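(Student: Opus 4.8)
The plan is to exploit the potential-game structure established in~\eqref{eqn:potential_condition}, where the common objective $F$~\eqref{eqn:global_objective_expectation} serves simultaneously as every player's payoff and as the potential. Since Algorithm~\ref{alg:individual_br} returns, for the fixed opponent profile $\pi_{-i}$, a deterministic policy attaining $W_0^\star$ in~\eqref{eqn:multiplicative_dp}, i.e., the exact maximizer of $\pi_i \mapsto F(\pi_i,\pi_{-i})$, each update in Algorithm~\ref{alg:iterative_br} is an exact unilateral best response. The first step is therefore to record the monotonicity inequality $F(\pi_i^k,\pi_{-i}^{k-1}) \ge F(\pi_i^{k-1},\pi_{-i}^{k-1})$, which holds because $\pi_i^k$ maximizes $F$ over $\Pi_i$ with the other coordinates frozen; consequently the scalar sequence $F(\pi^0), F(\pi^1), \ldots$ is non-decreasing.

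Second, I would restrict attention to the deterministic policy domain. By the solution-structure remark it suffices to search over deterministic local policies, of which there are finitely many ($A_i^{ST}$ per player), so $F$ takes values in a finite set $\mathcal{V}\subset[0,1]$, with $1$ an upper bound since $F$ is a probability. A bounded, non-decreasing sequence valued in a finite set must become constant after finitely many strict increases. This is exactly the finite improvement property of~\cite{monderer1996potential}: any best-response path in a finite (ordinal) potential game is finite, so Algorithm~\ref{alg:iterative_br} cannot cycle and must reach a profile at which a full sweep over $i=1,\ldots,N$ yields no strict improvement.

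Third, I would verify that the stopping criterion $V^i = V^j,\ \forall i,j$ correctly certifies a Nash equilibrium. Because the game is identical-interest, at a profile $\pi^\star$ the quantity $V^i = \mathbb{E}\big[W_0^\star(s_i)\mid s_i\sim p_i\big]$ equals $\max_{\pi_i} F(\pi_i,\pi_{-i}^\star)$, the best value player $i$ can unilaterally secure. If all $V^i$ coincide with the incumbent value $F(\pi^\star)$, then no player can strictly improve, which is precisely condition~\eqref{eqn:ne}; conversely, if some player could improve, its best-response value would strictly exceed the others and the test would fail, continuing the loop. Determinism of the $\argmax$ policies in Algorithm~\ref{alg:individual_br} ensures the returned equilibrium is pure.

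The hard part will be the \emph{polynomial-time} claim rather than convergence itself. The monotone-potential argument and the finite improvement property only guarantee termination after finitely many improving sweeps, and in general the number of distinct values of $F$, hence the worst-case number of best-response rounds, can be exponential in $ST$. To obtain a genuine polynomial bound one must either invoke additional structure (for instance a lower bound on the per-step improvement $F(\pi^{k})-F(\pi^{k-1})$ combined with a granularity argument on the achievable probabilities, or a bound on best-response path length specific to the reach-avoid potential), or interpret the statement as per-round polynomial cost, each call to Algorithm~\ref{alg:individual_br} being polynomial in $S$ for fixed $N$ by the complexity analysis, combined with finite termination. I would make this distinction explicit and, if a true polynomial iteration bound is intended, isolate and prove the required improvement-granularity lemma, since that is the only non-routine ingredient.
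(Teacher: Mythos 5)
Your convergence argument is, in substance, exactly what the paper relies on: the paper offers no proof of this theorem at all beyond the citation, and the cited result of~\cite{monderer1996potential} is precisely the finite improvement property you reconstruct --- $F$ is a common-interest potential, each call to Algorithm~\ref{alg:individual_br} is a unilateral maximization, the deterministic local policy space is finite, so the monotone sequence $F(\pi^0) \le F(\pi^1) \le \cdots$ admits only finitely many strict increases and the dynamics terminate at a pure profile satisfying~\eqref{eqn:ne}. Your verification of the stopping criterion (each $V^i$ equals player $i$'s unilateral best value, so agreement of all $V^i$ with the incumbent value certifies~\eqref{eqn:ne}) is an additional step the paper does not spell out, and it is correct modulo the caveat below.

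Your refusal to prove the polynomial-time clause is the right call, and it exposes a genuine overclaim in the statement: \cite{monderer1996potential} guarantees only \emph{finite} convergence, best-response paths in finite potential games can be exponentially long in general (finding a pure equilibrium in potential games is PLS-complete), and the paper supplies neither an improvement-granularity lemma nor any structure of the reach-avoid potential that would yield a polynomial iteration bound. The only reading consistent with the paper's content is your weaker interpretation: per-iteration polynomial cost (for fixed $N$) plus finite termination. One further caveat you should surface more forcefully: your monotonicity inequality $F(\pi_i^k,\pi_{-i}^{k-1}) \ge F(\pi_i^{k-1},\pi_{-i}^{k-1})$ presumes Algorithm~\ref{alg:individual_br} returns an \emph{exact} best response over $\Pi_i$. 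The paper asserts this when defining $W^\star$ in~\eqref{eqn:multiplicative_dp}, but the backward recursion weights opponent states by their \emph{unconditioned} occupancy measures $\rho_{-i}(\cdot,t)$ rather than by the distribution of $s_{-i}$ conditioned on $s_i$ and on no collision having occurred before time $t$; since the reach-avoid reward couples player $i$'s path to the opponents' paths across time, these distributions differ in general, so exactness of the best response is itself a nontrivial claim. If the best response is only approximate, even the finite improvement property --- and hence termination of Algorithm~\ref{alg:iterative_br} --- is no longer automatic, so your proof (like the paper's citation) is conditional on that exactness.
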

While Algorithm \ref{alg:local_offline_dp} provides the necessary conditions for reaching a Nash equilibrium, its practicality depends on the reduction in resource consumption. Table~\ref{table:complexity_comparison} summarizes the requirements of our proposed approach against the global feedback baseline. The memory requirements for the online policy extraction in \eqref{eqn:global_dp_online} and \eqref{eqn:local_policy_extraction} are omitted as they are dominated by the storage of the value functions computed during the offline phase.
\begin{table}[h!]
\centering
\caption{Computational and Memory Complexity Comparison}
\label{table:complexity_comparison}
\setlength{\tabcolsep}{6pt} 
\begin{tabularx}{\columnwidth}{@{} l X l l @{}}
\toprule
& \textbf{Policy Type} & \textbf{Time Complexity} & \textbf{Memory Requirement} \\ \midrule
\multirow{2}{*}{\textbf{Offline}} 
& $V^0,\ldots, V^T$ (Alg.~\ref{alg:global_dp_offline})  
& $\mathcal{O}(TN |\mathcal{A}_i|^N |\mathcal{S}_i|^{2N})$  
& $\mathcal{O}(T |\mathcal{S}_i|^N)$ \\ \addlinespace 
& $W_i^0,\ldots W_i^T$ (Alg.~\ref{alg:local_offline_dp})    
& $\mathcal{O}(TN|\mathcal{A}_i| |\mathcal{S}_i|^{2N})$  
& $\mathcal{O}(T N |\mathcal{S}_i| )$ \\ \midrule
\multirow{2}{*}{\textbf{Online}}  
& $\pi_G^\star$ (Eqn.~\eqref{eqn:global_dp_online})     
& $\mathcal{O}(|\mathcal{A}_i|^N  |\mathcal{S}_i|^N)$  
& --- \\ \addlinespace 
& $\pi_1^\star,\ldots \pi_N^\star$ (Eqn.~\eqref{eqn:local_policy_extraction}) 
& $\mathcal{O}(|\mathcal{A}_i|  |\mathcal{S}_i|)$  
& --- \\ \bottomrule
\end{tabularx}
\end{table}

\textbf{Obstacles with Markov dynamics}. In~\eqref{eqn:ma_value_iteration}, we observe that $\sum_{\hat{s}_{-i}} \prod_{j\neq i} y^t_{j}(s_{j}, \hat{s}_j)V_\pi^{t+1}(\hat{s}_{i}, \hat{s}_{-i})$ is the expected future reward for agent $i$ if it makes the transition from $s_i$ to $\hat{s}_i$ at time $t$. Furthermore, the instantaneous reward component of agent $i$'s expected value function~\eqref{eqn:ma_value_iteration} is given by $\sum_{s_{-i}} \rho^t_{-i}(s_{-i}) \prod_{j \neq i} Y_{ij}(s_i,s_j).$
This is equivalent to the probability of all agents avoiding each other at time step $t$ under policies $\pi_{-i}$ and conditioned on agent $i$ being at state $s_i$, i.e.,
 \begin{equation}\label{eqn:obstacle_interpretation}
    \Prob{s_j^t \neq s_{\ell}^t, \ \forall j, \ell \in [N] | s_i^t = s_i, \tau_j \sim h_j(\pi_j), \forall j \neq i}.
\end{equation}
In particular, if each agent's transition dynamics and initial states are deterministic, such that each agent $j$ has deterministic states $s_j^t = s_j$, then~\eqref{eqn:obstacle_interpretation} recovers the indicator function $\prod_{j,\ell}\ones\big(s_j^t\neq s_\ell^t\big)$: no agent is in the same state. When agent transition dynamics and initial states  are stochastic, the instantaneous rewards  become the probability of no agents being in the same states: 
\[\textstyle\prod_{j,\ell}\ones\big(s_j^t\neq s_\ell^t\big) \rightarrow \Prob{\prod_{j, \ell}s_j^t\neq s_{\ell}^t | s_i^t = s_i}.\]

We compare the output memory complexity and time complexity of Algorithms 1 and 2 and summarize the results in Table~\ref{table:complexity_comparison}. In terms of memory complexity, Algorithm~\ref{alg:iterative_br} sequentially runs Algorithm~\ref{alg:local_offline_dp}, ultimately requiring $N(T+1)|\mc{S}_i|$ memory units for storing the joint local feedback policies. This is a reduction on the memory complexity of Algorithm \ref{alg:global_dp_offline}, which requires $(T+1)|\mc{S}_i|^N$ memory units to store the joint action $a$ for every agent. In addition to reduced output memory complexity, we show in simulation that the peak memory requirement of Algorithm~\ref{alg:iterative_br} is also significantly less than the multiplicative DP (Algorithm~\ref{alg:global_dp_offline}). 


\textbf{Local feedback best response computation complexity}. Algorithm \ref{alg:local_offline_dp} takes 1) (line 4-7) $TN|\mc{S}_i|^2|\mc{A}_i|$ operations to retrieve occupancy measure, 2) (lines 11-12) $|\mc{S}_i|^{N}$, 3) (line 14-15) $TN|\mc{S}_i|^{2N}$ to compute two-time step occupancy measure $\rho(s_{-i}, \hat{s}_{-i})$, 4) (lines 18-21) $TN|\mc{A}_i| |\mc{S}_i|^{2N}$, to compute the previous time step value functions and best response functions.  
The resulting worst-case computation complexity is then $\mathcal{O}(TN |\mc{A}_i| |\mc{S}_i|^{2N})$. 
However, a key factor that affects Algorithm~\ref{alg:local_offline_dp}'s computation complexity is the occupancy measure at each state. For MDPs with sparse transitions---i.e., most of the agent's occupancy measures transition predominantly to a small subset of states---may be faster to evaluate than the worst-case computation complexity. 
Therefore, we propose using the following heuristic to approximate the two-time step occupancy measure $\rho(s_{-i}, \hat{s}_{-i})$ in Alg.~\ref{alg:local_offline_dp} (lines 14-15) to reduce the computation complexity and trade-off computation efficiency for accuracy.
\begin{equation}\label{eqn:density_approximation}
    \rho(s_{-i}, \hat{s}_{-i}) \approx \begin{cases}
        0 & \exists j \neq i, \rho^t(s_{-i}) \leq \epsilon\\
        \textstyle\prod\limits_{j\neq i} \mathbb{P}_j( \hat{s}_j | s_j)\rho^t(s_{-i}) & \text{otherwise}
    \end{cases}
\end{equation}

\textbf{Global feedback DP computation complexity.} In Algorithm \ref{alg:global_dp_offline}, the two major computation steps are: 1) (line 1) $|\mc{S}_i|^{2N}$ steps initialization to assign target indicators,
2) (line 4-5), evaluating the  summation $\textstyle\sum_{\hat{s} \in \mathcal{S}}\mathbb{P}_i[\hat{s}_i | s_i, a_i] V^{t+1}(\hat{s})$ takes $NS_i^N$ operations, and evaluating the $\max$ over $|\mc{A}_i|^N$ actions result in $N|\mc{A}_i|^N  |\mc{S}_i|^N$ operations. 
Lines 4-5 incur the most computation complexity, resulting in a total complexity of $\mathcal{O}(TN |\mc{A}_i|^N |\mc{S}_i|^{2N})$.
While the  computation complexity remains exponential in $N$, the proposed decomposition shifts this burden \emph{offline}, and enables a decentralized implementation \emph{online}. The decentralized implementation reduces the dimensionality of the policy space and enables tractable computation in regimes where centralized multiplicative DP is infeasible due to memory and coordination constraints.

\section{Multi-agent motion planning}

We evaluate Algorithm~\ref{alg:iterative_br}'s efficacy at finding collision-free trajectories in a multi-agent motion planning problem on a grid-world MDP. The grid world has dimensions $M_R \times M_C$ and is executed for $T+1$ time steps for $N$ agents. Agents receive randomized initial and final assigned target squares on the far left and far right columns of the grid world, respectively, and attempt to reach their randomly assigned target squares while avoiding each other. Agent target squares are assigned such that all agents are ensured to encounter collision. Each agent's action is to go up, down, left, or right subjected to world boundaries. Each action has an associated transition accuracy $p\in [0,1]$: instead of reaching the action's target destination deterministically, the target is reached with probability $p $ and a neighbor at random is reached with probability ${1-p}$. 
We evaluate Algorithm~\ref{alg:iterative_br}'s performance, memory requirement, and computation efficiency in four test scenarios via $K$ Monte Carlo trials, the hyper-parameters of each test scenario is given in Table~\ref{tab:test_hyper_parameters} and the results are shown in Figures~\ref{fig:collision_potential_values}--\ref{fig:scalability}.
\begin{table}[hbtp!]
    \centering
    \caption{Simulation hyper-parameters.}
    \begin{tabular}{p{0.05\columnwidth}p{0.13\columnwidth}p{0.1\columnwidth}p{0.1\columnwidth}p{0.15\columnwidth}p{0.12\columnwidth}}
    Figure & State size $(M_RM_C)$ & Horizon ($T$)  & Agents $(N)$ & Stochasticity $(p)$& Trial size $(K)$\\
    \hline
     1  & 40 & 15 &  3 & $[0.75, 0.95]$ & 50   \\
     2  &  36    & 12 &   2 &  $[0.1,1]$ & 100\\
     3  &  $[4,64]$    & 15 &   2 &  $0.95$ & 100\\
     4  &  9    & 5 &   $[2,8]$ &  $0.95$ & 100\\
    \end{tabular}
    \label{tab:test_hyper_parameters}
\end{table}

\begin{figure}[hbtp!]
    \centering
    \includegraphics[width=1\columnwidth]{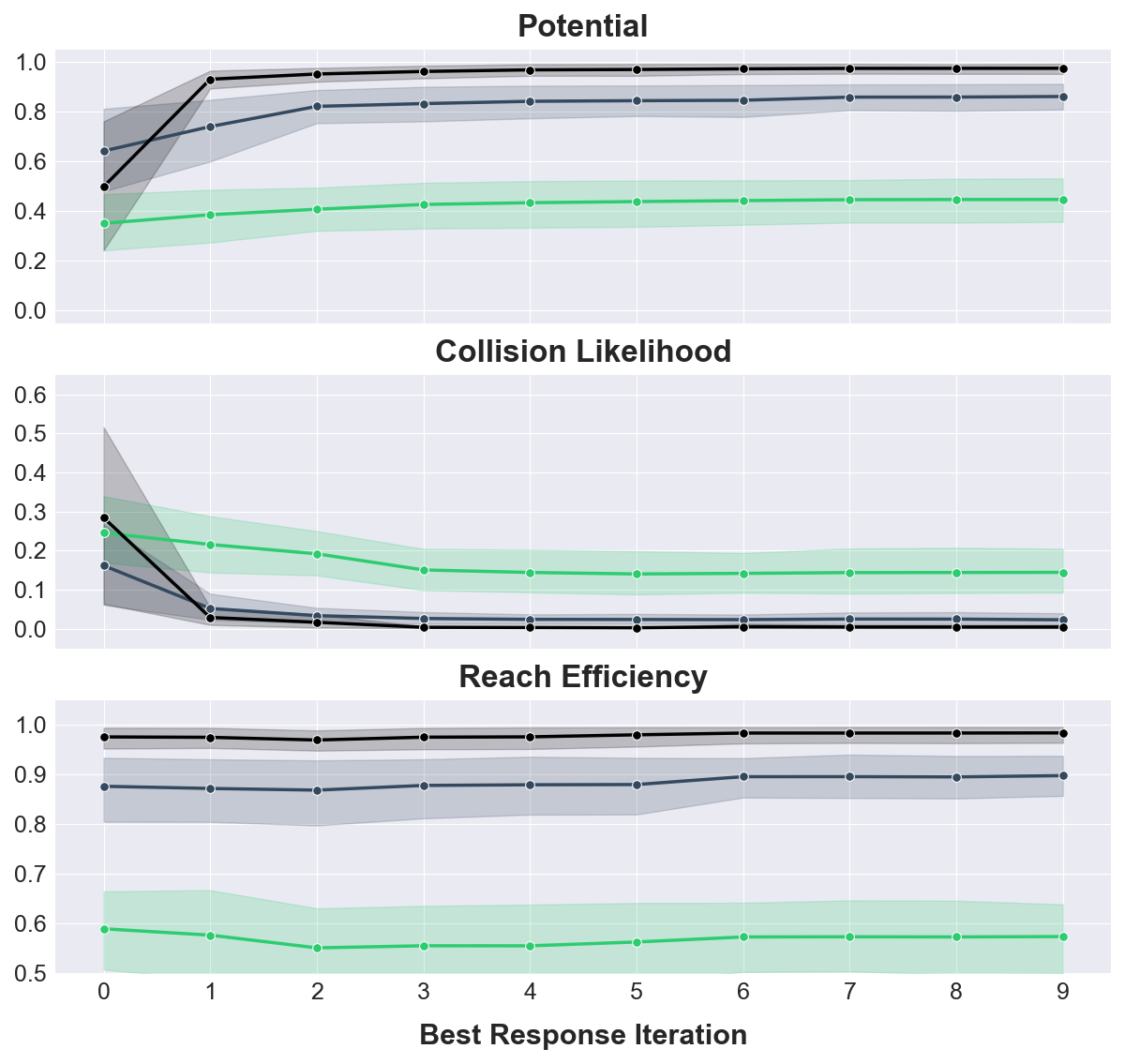}
    \caption{Reach-avoid metrics over different action stochasticity values (green to black and corresponds to $p = 0.75$ to $p = 0.95$).}
    \label{fig:collision_potential_values}
\end{figure}

\textbf{Reach-avoid performance}. We denote the output of algorithm 3 by $\pi_{1}^\star, \ldots, \pi_N^\star$ and the optimal global feedback policy from Algorithm~\ref{alg:global_dp_offline} and~\eqref{eqn:global_dp_online} by $\pi^\star_{G}$. To quantify the performance gap between our approach and the optimal global solution, we visualize the following three metrics in sub-plots: (1) \textbf{potential}: the expected reach-avoid objective $ \mathbb{E}\big[ R(\tau_1,\ldots, \tau_N)| \tau_i \sim h_i(\pi^\star_i), \forall i \in [N]\big]$~\eqref{eqn:reach_avoid_obj}, (2) \textbf{collision likelihood}: the collision probability among any two agents at any time  $t \in \mc{T}$,  given by
\begin{equation}\label{eqn:empirical_collision_likelihood}
    \textstyle\mathbb{E}\big[1 - \prod_{t=0}^{T}\prod_{i,j\in[N]} Y_{ij}\big(s_i^t, s_j^t\big)| \tau_j \sim h_j(\pi^\star_j), \forall j\in [N]\big], 
\end{equation}
and (3) \textbf{reach efficiency}: the fraction of reach probabilities arising from the proposed policy to the optimal global feedback policy, given by


\begin{equation}\label{eqn:reach_efficiency}
    \frac{\textstyle\mathbb{E}\big[\prod_{j\in[N]}X_j\big(s_j^T\big) | \tau_j \sim h_j(\pi^\star_j), \forall j\in [N]\big]}{\mathbb{E}\big[\prod_{j\in[N]}X_j\big(s_j^T\big) | \tau_1,\ldots,\tau_N \sim h(\pi^\star_{G})\big]}.
\end{equation}
We observe that on average, all three metrics stabilize to their asymptotic values between $5$ and $10$ iterations of Algorithm~\ref{alg:iterative_br}. Furthermore, because each agent always initiates the iterative best response with the shortest individual path, the initial efficiency in reaching agent targets is always one. However, these policies also incur collision likelihoods averaging around $30\%$. As agents maneuver around each other to reduce this collision likelihood, the reach efficiency first decreases but then gradually increases, while the collision likelihood decreases asymptotically. We note that lower transition accuracy $p$ leads to more unavoidable collisions. This is reflected by the asymptotic trends observed in Figure~\ref{fig:collision_potential_values}.
\begin{figure}[hbtp!]
    \centering
    \includegraphics[width=1\columnwidth]{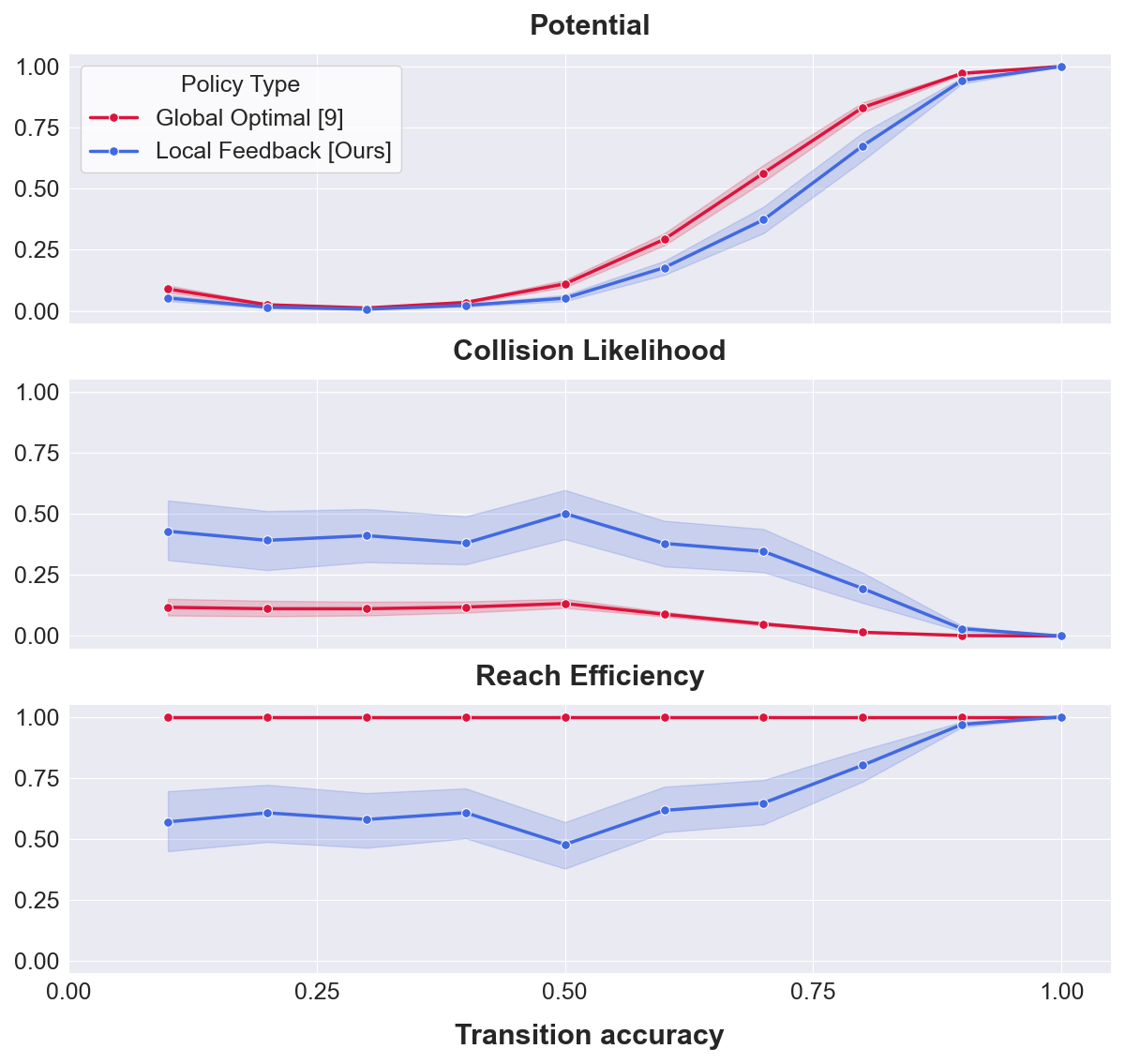}
    \caption{Comparison of metrics for optimal global policy $\pi^\star_G$~\eqref{eqn:policy_def} and Nash policy $\pi_1^\star,\ldots,\pi_N^\star$~\eqref{eqn:ne} over transition accuracy values $ p \in [0.1,1]$. 
    }
    \label{fig:local_global_comparison}
\end{figure}
We compare the  Nash local feedback policies from Algorithm \ref{alg:iterative_br} with the optimal global feedback policy from Algorithm \ref{alg:global_dp_offline} via the reach-avoid performance metrics. Results are shown in Figure~\ref{fig:local_global_comparison}.  
We find that the potential value $F$ achieved by the Nash local feedback policy closely approximates the potential value achieved by the optimal global feedback policy for all $ p \in [0.1, 1]$.
However, the collision likelihood and reach reduction separately show a larger performance gap between the Nash local feedback policy and the optimal global feedback policy when $p \leq 0.8$ (MDP is more stochastic), with this performance gap declining and leveling out for $p \in [0,0.5]$.
We also note significantly larger variance across MC trials are associated with the Nash policies in comparison to the optimal global policies, likely due to the combined contribution of environment stochasticity and the lack of real-time global state feedback, making performance sensitive to initial conditions.


\textbf{Memory requirements and computation efficiency.} We measure Algorithm~\ref{alg:global_dp_offline} and Algorithm~\ref{alg:local_offline_dp}'s peak memory usage using python's native \texttt{tracemalloc} function and their computation time using the python \texttt{timer} package. Figure 3 shows how these metrics scale with increasing state size and Figure~\ref{fig:scalability} shows how they scale with increasing agent number. For both cases, Algorithm~\ref{alg:iterative_br} is run until the change in potential value decreases below $10^{-5}$. In Figure~\ref{fig:scalability}, we show the number of best response iterations to achieve this potential value decrease, where one best response iteration is defined as all agents computing best response once. 
\begin{figure}[hbtp]
    \centering
    \includegraphics[width=\columnwidth]{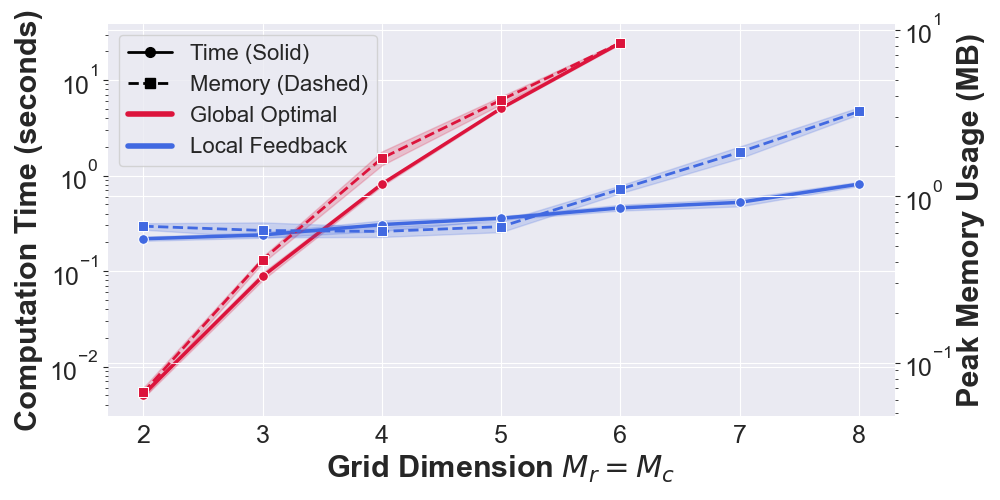}
    \caption{Computation time and memory allocation vs state sizes for the two agent setting.}
    \label{fig:mem_comp_plot}
\end{figure}

\emph{Increasing grid dimensions}. From Figure~\ref{fig:mem_comp_plot}, we observe that the peak memory requirements for the global optimal policy scale polynomial with state space size $M_R\times M_C$, which is consistent with the noted theoretical complexity of $O(|\mc{S}_i|^{2N})$ for joint state-space representations. Algorithm~\ref{alg:iterative_br} also has polynomial scaling, but uses about $100$ times memory at $6 \times 6$ grid size. We observe that Algorithm~\ref{alg:iterative_br} maintains peak memory allocation below 4MB for all tested dimensions. As expected, we observe that both iterative best response and  the global feedback policy scales as polynomials over the increasing state space.

\begin{figure}[hbtp] %
    \centering
    \includegraphics[width=\columnwidth]{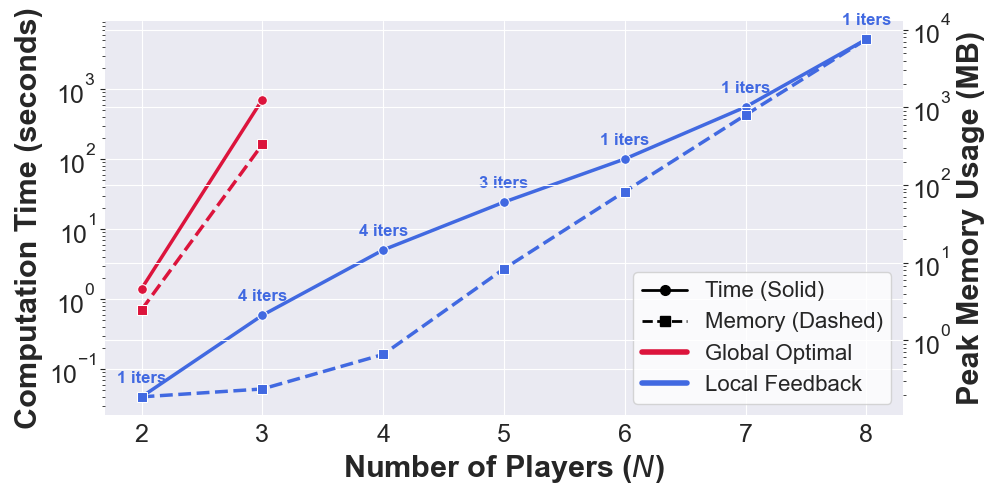}
    \caption{Computation time and memory allocation vs state sizes.}
    \label{fig:scalability}
\end{figure}

\emph{Increasing number of agents.} Figure~\ref{fig:scalability} shows that by using local feedback policies, we can make the multi-agent reach-avoid MDP computationally tractable and resource efficient for a higher number of agents than previously possible. Although Algorithm~\ref{alg:iterative_br}'s computation complexity and memory usage both scale exponentially in the number of agents, with $8$ agents taking up to $45$ minutes for one best response iteration, it remains much more tractable than finding the optimal global feedback policy. Specifically, Algorithm~\ref{alg:global_dp_offline} did not run to completion in within $10^4$ seconds for any scenarios where $N\geq 4$, whereas our approach provides a usable Nash policy for large $N$. Algorithm~\ref{alg:iterative_br}'s structure is also highly amenable to parallel computing, which we aim to explore in future research.

\section{Conclusion}
We provided an approximation to global feedback reach-avoid MDPs by formulating a game-theoretic framework that decomposes the global feedback policy into local feedback policies. Our simulations show that the decomposed multiplicative DP successfully finds Nash equilibrium policies and significantly reduces both the computation complexity and memory usage in comparison to the global feedback multiplicative DP.



\bibliographystyle{IEEEtran}
\bibliography{root}
\appendix
\subsection{Proof of Proposition~\ref{prop:recursive_values}}
\label{app:a}
\begin{proof}
For each $s\in \mc{S}$, we prove the following recursive identity for~\eqref{eqn:ma_value_function}: if $V^{t+1}_\pi(s)$ satisfies~\eqref{eqn:pf_prop_0}, then $V_\pi^{t}(s)$ satisfies~\eqref{eqn:pf_prop_0}.
    
    If $V^{t+1}_\pi(s)$ satisfies~\eqref{eqn:pf_prop_0}, it is equivalent to
    \begin{equation}\label{eqn:pf_prop_1}
        V^{t+1}_\pi(s) = \sum_{\tau} R_{t+1}^T\big((s,\tau)\big)\prod_{j}\prod_{\hat{t}=t+2}^{T}\Prob{\tau_j^{\hat{t}+1} | \tau_j^{\hat{t}}, \pi_j},
    \end{equation}
    for all $s\in \mc{S}, $
    where the product $\prod_{\hat{t}=t}^{T}\Prob{\tau_j^{\hat{t}+1} | \tau_j^{\hat{t}}, \pi_j}$ is the probability of realizing the trajectory $\tau_j^{t+1},\ldots, \tau_j^{T}$ when $\tau_{j}^{t+1} = s_j$ for all $j \in [N]$. 
    We use~\eqref{eqn:pf_prop_1} to define $V_\pi^{t+1}(\hat{s})$ and~\eqref{eqn:ma_value_function} to evaluate $V_\pi^{t}(s)$ as 
    \begin{multline}
       V_\pi^{t}(s) = \prod_{j,\ell} Y(s_j,s_\ell) \sum_{\hat{s}_1,\ldots,\hat{s}_N}\prod_{j}  \Prob{\hat{s}_j | s_j, \pi_j} \\
       \sum_{\tau_{t+2}} R_{t+1}^T\big((\hat{s}, \tau^{t+2})\big) \prod_{\hat{t} = t+2}^{T} \prod_{j}\Prob{\tau_j^{\hat{t}+1} | \tau_j^{\hat{t}}, \pi_j}
    \end{multline}
    We can combine the summations $\sum_{\hat{s}}$ and $\sum_{\tau^{t+2}}$ to $\sum_{\tau^{t+1}}$ by noting that $\sum_{\hat{s}}\sum_{\tau^{t+2}}$ is equivalent to a single summation over $ (\hat{s}, \tau^{t+2}) \in \mc{S}^{(T-t)}$, which we define as $\tau^{t+1}$. Under this definition of $\tau^{t+1}$, $\prod_{j}\Prob{\hat{s}_j | s_j, \pi_j}\prod_{j}\prod_{\hat{t} = t+2}^{T} \prod_{j}\Prob{\tau_j^{\hat{t}+1} | \tau_j^{\hat{t}}, \pi_j} = \prod_{\hat{t} = t+1}^{T}\prod_{j} \Prob{\tau_j^{\hat{t}+1} |\tau_j^{\hat{t}}, \pi_j}$. 

    For the trajectory $(s, \tau^{t+1})$, the reach-avoid objective $R_{t+1}^T\big((s, \tau^{t+1})\big)$ also satisfies the recursive relationship 
    \[R_{t}^T\big(({s}, \tau^{t+1})\big)  = \prod_{j,\ell} Y({s}_j,{s}_\ell) R_{t+1}^T(\tau^{t+1}).\]
    Therefore, we can conclude that for all joint states $s \in \mc{S}^N$.
    \begin{equation}\label{eqn:pf_prop_induction_conclusion}
        V_\pi^{t}(s) = \sum_{\tau_{t+1}}R_{t}^T\big((s, \tau^{t+1})\big)\prod_{\hat{t} = t+1}^{T}\prod_{j}\Prob{\tau_j^{\hat{t}+1} | \tau_j^{\hat{t}}, \pi_j}. 
    \end{equation}
    Finally, since $V_\pi^T$ satisfies the expectation evaluation~\eqref{eqn:pf_prop_0}, $V_\pi^{T-1},\ldots, V_\pi^{0}$ all satisfies~\eqref{eqn:pf_prop_0}. 
\end{proof}
\end{document}